\documentclass{patmorin}
\usepackage{pat}
\usepackage{hyperref}
\usepackage{paralist}
\usepackage[noend]{algorithmic}
\usepackage{datetime2}
\hypersetup{colorlinks=true, linkcolor=linkblue,  anchorcolor=linkblue,
citecolor=linkblue, filecolor=linkblue, menucolor=linkblue,
urlcolor=linkblue, pdfcreator=Me, pdfproducer=Me} 
\setlength{\parskip}{1ex}

\title{\MakeUppercase{Near-Optimal }$O(k)$-\MakeUppercase{Robust Geometric Spanners}\thanks{%
This research was partly funded by NSERC.
}}

\author{Prosenjit Bose,\thanks{%
        School of Computer Science, Carleton University}\quad 
        Paz Carmi,\thanks{Department of Computer Science, Ben-Gurion University of the Negev}\quad
        Vida Dujmovi\'c,\thanks{%
        School of Computer Science and Electrical Engineering, University of Ottawa}\quad
        and Pat Morin\footnotemark[2]}

\date{\DTMnow}

\DeclareMathOperator{\rank}{r}
\DeclareMathOperator{\diam}{diam}
\DeclareMathOperator{\dist}{dist}
\DeclareMathOperator{\lbl}{label}
\DeclareMathOperator{\ar}{area}

\listfiles

\begin{document}
\maketitle

\begin{abstract}
  For any constants $d\ge 1$, $\epsilon >0$, $t>1$, and any $n$-point
  set $P\subset\R^d$, we show that there is a geometric graph $G=(P,E)$
  having $O(n\log^2 n\log\log n)$ edges with the following property:
  For any $F\subseteq P$, there exists $F^+\supseteq F$, $|F^+| \le
  (1+\epsilon)|F|$ such that, for any pair $p,q\in P\setminus F^+$,
  the graph $G-F$ contains a path from $p$ to $q$ whose (Euclidean)
  length is at most $t$ times the Euclidean distance between $p$ and $q$.
  
  In the terminology of robust spanners (Bose \etal, SICOMP,
  42(4):1720--1736, 2013) the graph $G$ is a $(1+\epsilon)k$-robust
  $t$-spanner of $P$. This construction is sparser than the recent
  constructions of Buchin, Ol\`ah, and Har-Peled (arXiv:1811.06898)
  who prove the existence of $(1+\epsilon)k$-robust $t$-spanners with
  $n\log^{O(d)} n$ edges.
\end{abstract}

\section{Introduction}

A geometric graph $G=(P,E)$ with vertex set $P\subset\R^d$ is a (geometric)
$t$-spanner of a subset $X\subseteq P$ if, for every pair of distinct vertices
$p,q\in X$, 
\begin{equation}
  \frac{\dist_G(p,q)}{\dist(p,q)} \le t \enspace , \eqlabel{spanning-ratio}
\end{equation}
where $\dist(p,q)$ denotes the Euclidean distance between $p$ and
$q$ and $\dist_G(p,q)$ denotes the Euclidean length of the shortest
path between $p$ and $q$ in $G$, where we use the convention that
$\dist_G(p,q)=\infty$ if $p$ and $q$ are in different components of $G$.
Most of the research on spanners focuses on \emph{sparse} spanners,
where the number of edges in $G$ is linear, or close to linear, in $|P|$.
In addition to having natural applications to transportation networks,
sparse $t$-spanners have found numerous applications in communication
networks, approximation algorithms, and geometric data structures.
A book \cite{ns07} and handbook chapter \cite{e99} provide extensive
discussions of geometric $t$-spanners and their applications.

For any non-decreasing function $f\colon\N\to\N$, Bose \etal\
\cite{bose.dujmovic.ea:robust} define a geometric graph $G=(P,E)$ to be an
\emph{$f(k)$-robust $t$-spanner} if, for every set $F\subseteq P$,
there exists a set $F^+\supseteq F$ such that $|F^+|\le f(|F|)$ and
the graph $G-F$ is a $t$-spanner of $V(G)\setminus F^+$.  In networking
applications, this definition captures the idea that the number
of nodes ($|F^+|$) harmed by a set of faulty nodes ($F$) should be bounded by a function
($f$) of the number of faulty nodes ($|F|$), independent of the network
size ($|P|$).

Under this definition, the most robust spanner one could hope for
is a $k$-robust spanner, but it is straightforward to argue
that, even for one-dimensional point sets, the complete graph is the
only $k$-robust spanner.\footnote{Proof: Consider any pair of vertices
$v,w\in V(G)$ that are not adjacent in $G$ and let $F=V\setminus\{v,w\}$.
Then $\|vw\|_{G-F}=\infty$ so $G-F$ is a not a $t$-spanner of $V\setminus
F^+=V\setminus F$ for any $t<\infty$.} The complete graph is not sparse,
and is therefore not suitable for many applications.

A natural second-best option is a $(1+\epsilon)k$-robust spanner with
a near-linear number of edges, for some small constant $\epsilon >0$.
Buchin \etal\ \cite{buchin.har-peled.ea:spanner} call these objects
\emph{$\epsilon$-resilient} spanners and prove the existence of
$\epsilon$-resilient spanners with $n\log^{O(d)} n$ edges.
In the current paper we reduce the dependence on $d$ by proving the
following theorem:

\begin{thm}\thmlabel{main-i}
  For every constant $d\ge 1$, $\epsilon>0$, $t>1$ and every $n$-point
  set $P\subseteq\R^d$, there exists an $\epsilon$-resilient $t$-spanner
  $G=(P,E)$ with $|E|=O(n\log^2 n\log\log n)$.
\end{thm}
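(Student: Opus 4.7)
The plan is to combine a hierarchical geometric decomposition of $P$ with a sparse ``robust bipartite connector'' applied at each scale of the decomposition. I would first fix a well-separated pair decomposition (WSPD) of $P$ with separation parameter $\Theta(1/(t-1))$, yielding a family of pairs $\{(A_i,B_i)\}$ such that every pair $p,q\in P$ is captured by some pair of clusters whose diameters are much smaller than the separation between them: any bipartite edge between $A_i$ and $B_i$ then certifies a near-optimal $t$-spanner route for the pairs it covers. Classical WSPD-based spanners keep only one representative edge per pair (non-robust); a naive $k$-fault-tolerant construction keeps the complete bipartite graph (too dense). The goal is to use something between these extremes.

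The key building block will be a \emph{resilient bipartite connector}: for two sets $A,B$ with $|A|+|B|=N$, a bipartite graph on $A\cup B$ with $O(N\log\log n)$ edges such that, for every $F\subseteq A\cup B$, there exists $F^+\supseteq F$ with $|F^+|\le (1+\epsilon')|F|$ and every vertex in $A\setminus F^+$ has a neighbour in $B\setminus F$ (and symmetrically). Such connectors can be built from bipartite vertex expanders: a random bipartite graph of degree $\Theta(\log(1/\epsilon')+\log\log n)$ suffices with high probability. Placing one such connector on each WSPD pair, together with an auxiliary sub-hierarchy within each cluster to route among the several WSPD pairs a cluster participates in, should produce the claimed bound of $O(n\log^2 n\log\log n)$ edges: one $\log n$ factor from the multi-level nature of the decomposition, another $\log n$ from the intra-cluster sub-hierarchy, and $\log\log n$ from the connector degree.

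The principal obstacle is aggregating the per-pair damage sets $F^+_i$ into a single global $F^+$ satisfying $|F^+|\le(1+\epsilon)|F|$. A single faulty vertex can participate in $\Theta(\log n)$ different WSPD pairs (one per ancestor scale), so a naive union would overcount by a logarithmic factor. The fix is to pick $\epsilon'$ sufficiently smaller than $\epsilon$ and to distribute the damage allowance across scales by a geometric-series argument, exploiting the fact that at coarser scales the clusters contain more points so a fixed $|F|$ produces a proportionally smaller $\epsilon'$-fraction of damage.

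Once $F^+$ has been defined as the (carefully weighted) union of the per-scale damages, exhibiting a short path between any surviving $p,q\in P\setminus F^+$ is a standard WSPD argument, with the single representative edge of the classical construction replaced by the edge guaranteed, inside the relevant connector, by the definition of $F^+$. The bulk of the technical work, therefore, will lie in (i) constructing or invoking the resilient bipartite connector with the stated degree and resilience trade-off, and (ii) setting up the damage accounting across the $O(\log n)$ levels so that both the edge bound and the $(1+\epsilon)$-resilience guarantee hold simultaneously.
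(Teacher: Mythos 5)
Your high-level template (WSPD $+$ resilient connectors per pair $+$ an intra-cluster structure) matches the paper's outline, but the specifics have several gaps that would sink the edge bound and the resilience accounting.

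\textbf{Edge count.} You propose a connector of size $O((|A_i|+|B_i|)\log\log n)$ on each WSPD pair. For the Callahan--Kosaraju WSPD, $\sum_i\min\{|A_i|,|B_i|\}=O(n\log n)$, but $\sum_i(|A_i|+|B_i|)$ can be $\Theta(n^2)$ (e.g., a tight cluster of $n/2$ points paired with $n/2$ far-flung singletons). So per-pair connectors whose size scales with the \emph{larger} side of the pair are too dense. The paper sidesteps this by attaching the ``expansion'' part of each connector to a \emph{special} node $u$ of the fair-split tree, and the leaf sets of special subtrees at each of the $O(\log n)$ label levels are pairwise disjoint, so the expansion parts sum to $O(n\log n)$; only the ``shrinking'' part of the connector is built out of $B_i$-vertices and hence scales with $\sum_i|B_i|=\sum_i\min\{|A_i|,|B_i|\}$.

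\textbf{Connector degree and damage aggregation.} Your degree $\Theta(\log(1/\epsilon')+\log\log n)$ is not enough to make the cross-scale accounting work. To bound $|F^+|\le(1+\epsilon)|F|$ when each vertex participates in $\Theta(\log n)$ pairs, the damage absorbed at each level must be an $O(\epsilon/\log n)$-fraction. The paper's shrinking lemma shows that to guarantee $|S_H(X)|\le|X|/\tau$ for sets $X$ of density up to $1-1/k$, one needs degree $\Omega(k\log\tau+\tau\log k+\log|B|)$; with $\tau=\Theta(\log n)$ this is $\Theta(\log n\log\log n)$, not $\Theta(\log\log n)$. Your geometric-series idea --- that coarser clusters being larger makes the damage fraction automatically shrink --- does not hold, because the abandoned vertices from one scale can all concentrate inside a single cluster of the next scale; there is no automatic dilution. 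What the paper does instead is compute every shadow from the \emph{same} set $F^+_0$ (not chaining shadow into shadow), and relies on the per-level $\epsilon/\log n$ shrinking factor plus a sum over $O(\log n)$ disjoint level-slices.

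\textbf{Intra-cluster routing.} Your ``auxiliary sub-hierarchy within each cluster'' is the load-bearing piece, and it is left unspecified. In the paper, this is the graph $G_T$ built via a centroid decomposition of the fair-split tree, with two kinds of expanders at every centroid split and a rank-based accounting ($\rank(T)=\lfloor\log_{3/2}|T|\rfloor$) that distributes the $\epsilon$-budget over the $O(\log n)$ recursion depth. This is what lets a surviving point ``explode'' into any ancestor subtree --- reach a $(1-o(1))$-fraction of the surviving leaves there by a short path --- and is where one of the $\log n$ factors and the $\log\log n$ factor actually come from. Without this component (or a replacement for it) the cross-pair edges from (PR4) cannot be reached by short paths, and the spanner argument breaks.
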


Bose \etal\ \cite{bose.dujmovic.ea:robust} show that, for any constants
$\epsilon>0$ and $t\ge 1$, there exists $1$-dimensional point sets
for which any $(1+\epsilon)k$-robust $t$-spanner has $\Omega(n\log
n)$ edges.  Thus, \thmref{main-i} is within a factor of $O(\log
n\log\log n)$ of optimal in any constant dimension.  (Note that in
dimension $d=1$ optimal constructions, having $O(n\log n)$ edges, are
known \cite{buchin.har-peled.ea:spanner}.)

The proof of \thmref{main-i} uses several ingredients: The well-separated
pair decomposition \cite{callahan.kosaraju:decomposition}, which
is fairly standard in spanner constructions.  Expander graphs
\cite{hoory.linial.ea:expanders}, that are a natural tool to
achieve robustness. Two less obvious techniques we use are a centroid
decomposition (i.e., hierarchical balanced separators) for binary trees
and an old idea of Willard \cite{willard:maintaining} for file maintenance
(aka, order maintenance) that involves a hierarchical structure whose
smaller substructures have more stringent density requirements than
larger substructures.

These last two ideas represent a significant departure from the work of
Buchin \etal\ \cite{buchin.har-peled.ea:spanner} who 
(among other tools) also use well-separated
pair decompositions and expanders.  Their constructions, of
which there are two, rely on a reduction to the 1-dimensional problem
and the fact that the paths obtained in the 1-d case have $O(\log n)$ edges.
However, they have very little fine-grained control over the lengths of
these edges, which requires them to construct a $d$-dimensional object
($\theta$-graphs  \cite{keil.gutwin:classes} or locality-preserving orderings \cite{chan.har-peled.ea:on}) in which the relevant
parameter ($\theta$ and $\varsigma$, respectively) is $O(1/\log n)$.  This leads
to $\log^{O(d)} n$ factors in the number of edges in their constructions.

In the remainder of the paper we first review some relevant background material and then present our $\epsilon$-resilient spanner construction.

\section{Background}

In this section we briefly review some existing results used in our
construction.

\subsection{Expanders}

For a graph $G$ and a vertex $x\in V(G)$, define the \emph{neighbourhood}
of $x$ in $G$ as $N_G(x) = \{ y: xy\in E(G)\}$.  For a subset $X\subseteq
V(G)$, $N_G(X)=\bigcup_{x\in X} N_G(x)$.  For a subset $Y\subseteq
V(G)$, define the \emph{shadow} of $Y$ in $G$ as $S_G(Y) = \{x\in V(G):
N_G(x)\subseteq Y\}$.

Results like the following lemma, and its proof, are fairly standard
expander constructions (see, for example, the survey by Hoory \etal\
\cite{hoory.linial.ea:expanders}):

\begin{lem}\lemlabel{expand}
   For any $k\ge 2$, $\ell\ge 2$, $n\in\N$ and any two sets $A$ and $B$
   each of size $\Theta(n)$, there exists a graph $H=(A\cup B,E)$
   with $|E|=O(n(k\log \ell + \ell\log k))$ such that, for any set $B'\subset B$, $|B'|\ge |B|/\ell$, \[ |N_H(B')| \ge (1-1/k)|A| \enspace . \]
\end{lem}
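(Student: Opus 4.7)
The plan is to construct $H$ as a random bipartite graph on $A\cup B$ and then apply the probabilistic method. Let $d$ be a parameter to be fixed later. Independently, for each vertex $b\in B$, choose $d$ neighbors in $A$ uniformly at random with replacement (allowing repetition only makes the bad event more likely, so this is safe). The resulting graph has at most $|B|d = O(nd)$ edges, so it suffices to show that $d = \Theta(k\log\ell + \ell\log k)$ is enough to make the desired expansion property hold with positive probability.

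Since $N_H$ is monotone in $B'$, it is enough to consider sets $B'\subseteq B$ of exactly the threshold size $b:=\lceil |B|/\ell\rceil$. The expansion fails on $B'$ iff there is a set $A^*\subseteq A$ of size $a:=\lceil |A|/k\rceil$ that is disjoint from $N_H(B')$, i.e., every one of the $|B'|d$ random choices made by vertices of $B'$ avoids $A^*$. For a fixed pair $(B',A^*)$, this event has probability at most
\[
   \left(1 - \tfrac{a}{|A|}\right)^{bd} \le \left(1-\tfrac{1}{k}\right)^{bd} \le e^{-bd/k}.
\]
The number of such pairs is at most
\[
   \binom{|B|}{b}\binom{|A|}{a} \le (e\ell)^{b}(ek)^{a},
\]
using the standard $\binom{m}{r}\le (em/r)^r$ inequality and the fact that $|A|/a\le k$ and $|B|/b\le \ell$.

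By the union bound, the probability that any bad pair $(B',A^*)$ exists is at most $(e\ell)^{b}(ek)^{a}e^{-bd/k}$. This is less than $1$ whenever $d/k > \ln(e\ell) + (a/b)\ln(ek)$. Since $|A|,|B|=\Theta(n)$, we have $a/b = \Theta(\ell/k)$, and the condition becomes $d = \Omega(k\log\ell + \ell\log k)$. Fixing $d$ at this value produces, with positive probability, a graph $H$ with the required expansion and $|E|=O(n(k\log\ell+\ell\log k))$ edges, so such a graph exists.

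The proof is essentially a classical probabilistic-method expander argument, and the only technical obstacle is getting the parameter count right: the two binomial factors contribute terms $b\log\ell$ and $a\log k = \Theta((b\ell/k)\log k)$ to the exponent, and these must each be dominated by $bd/k$. Balancing them is exactly what produces the asymmetric bound $k\log\ell+\ell\log k$ in the lemma, rather than a coarser $(k+\ell)\log(k\ell)$ estimate.
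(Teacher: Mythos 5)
Your proposal is correct and follows essentially the same route as the paper: both construct the random bipartite graph by having each $b\in B$ draw $\Delta$ neighbors in $A$ with replacement, bound the failure probability for a fixed witness pair by $e^{-bd/k}$, and take a union bound over $\binom{|B|}{b}\binom{|A|}{a}\le (e\ell)^b(ek)^a$ pairs to conclude $\Delta=O(k\log\ell+\ell\log k)$ suffices. The only cosmetic difference is that you phrase the bad event via the small avoided set $A^*$ of size $\approx |A|/k$ rather than its complement $A'$ of size $(1-1/k)|A|$, which are equivalent formulations.
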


\begin{proof}
  For simplicity of calculation, assume that $|A|=|B|=n$.  Fix some
  subset $A'\subset A$ of size $|A'|=(1-1/k)|A|$.  Let $a_1,\ldots,a_r$
  be a sequence of $r$ iud random samples from $A$.  Then the probability
  that all of these samples are in $A'$ is
  \[
     \Pr\{\{a_1,\ldots,a_r\}\subset A'\} = (|A'|/|A|)^r = (1-1/k)^r \le e^{-r/k}
  \]
  Let $A$ and $B$ be disjoint $n$-element sets and construct a random
  graph $H$ where each element in $B$ forms an edge with $\Delta$ randomly
  chosen (with replacement) elements in $A$.  For a fixed $A'\subset A$
  with $|A'|=(1-1/k)n$ and a fixed $B'\subset B$ with $|B'| = n/\ell$,
  \[
    \Pr\{N_H(B') \subseteq A'\} 
        \le (1-1/k)^{\Delta n/\ell} 
        \le e^{-\frac{\Delta n}{k\ell}}
  \]
  Let $\mathcal{E}$ be the event that there exists $A'\subset
  A$, $|A'|=(1-1/k)n$, $B'\subset B$, $|B'|=n/\ell$ such that
  $N_H(B')\subseteq A'$.  Then
  \begin{align*}
    \Pr\{\mathcal{E}\} 
        & \le \binom{n}{(1-1/k)n}\binom{n}{n/\ell}e^{-\frac{\Delta n}{k\ell}} \\
        & = \binom{n}{n/k}\binom{n}{n/\ell}e^{-\frac{\Delta n}{k\ell}} \\
        & \le (ek)^{n/k} (e\ell)^{n/\ell}e^{-\frac{\Delta n}{k\ell}} \\
        & = \exp((n/k)(1+\ln k) + (n/\ell)(1+\ln(\ell)) - (\Delta n)/(k\ell)) \\
        & < 1
  \end{align*}
  for $\Delta > k(1+\ln \ell) + \ell(1+\ln k)$.  In particular, there
  must exist at least one graph with $O(n(k\log\ell + \ell\log k))$
  edges that satisifies the conditions of the lemma.
\end{proof}

\lemref{expand} can be interpreted informally as saying that even small
subsets of $B$ (of size at least $n/\ell$) have neighbourhoods that
expand into most of $A$.  The following lemma, expressed in terms of
shrinking shadows of subsets of $A$, is also useful:

\begin{lem}\lemlabel{shrink}
   For any $k\ge 2$, $\tau\ge 1$ and any two sets $A$ and $B$ with
   $|A|=\Omega(|B|)$, there exists a graph $H=(A\cup B,E)$ with $|E|=O(|B|(k\log
   \tau + \tau\log k + \log|B|))$ such that for any $A'\subset A$ with $|A'|\le
   (1-1/k)|A|$,
   \[ |S_H(A')| \le |A'|/\tau \enspace .\]
\end{lem}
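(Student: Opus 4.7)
The plan is to extend the probabilistic argument used for \lemref{expand}. I will construct $H$ as a random bipartite graph in which each $b\in B$ receives $\Delta$ neighbours in $A$ chosen independently and uniformly (with replacement), for $\Delta = C(k\log\tau + \tau\log k + \log|B|)$ with $C$ a sufficiently large absolute constant. This automatically gives the desired edge count $|B|\Delta$, so it remains only to show that, with positive probability, every $A'\subseteq A$ with $|A'|\le(1-1/k)|A|$ satisfies $|S_H(A')|\le|A'|/\tau$.

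To set up the union bound I will recast the failure event as follows: $H$ fails iff there exist an integer $a\in\{0,\dots,\lfloor(1-1/k)|A|\rfloor\}$, a set $A'\subseteq A$ with $|A'|=a$, and a set $B'\subseteq B$ with $|B'|=s:=\lfloor a/\tau\rfloor+1$ such that $N_H(B')\subseteq A'$. Since the $\Delta s$ neighbour choices made by the vertices of $B'$ are independent uniform draws from $A$, $\Pr[N_H(B')\subseteq A']=(a/|A|)^{\Delta s}$, and a union bound over all such triples gives
\[
  \Pr[\text{failure}]\le\sum_{a=0}^{\lfloor(1-1/k)|A|\rfloor}\binom{|A|}{a}\binom{|B|}{s}\left(\frac{a}{|A|}\right)^{\Delta s}.
\]

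The main obstacle will be choosing $\Delta$ so that this sum is strictly below $1$ across the whole range of $a$. The constraint $|A|-a\ge|A|/k$ lets one bound $\binom{|A|}{a}=\binom{|A|}{|A|-a}\le(e|A|/(|A|-a))^{|A|-a}$, together with the standard estimate $\binom{|B|}{s}\le(e|B|/s)^{s}$ and $(a/|A|)^{\Delta s}\le\exp(-\Delta s(|A|-a)/|A|)$. Substituting these and examining the pinch point $a=(1-1/k)|A|$, $s=\Theta(|A|/\tau)$ reproduces exactly the trade-off from \lemref{expand} and forces the two requirements $\Delta=\Omega(k\log\tau)$ and $\Delta=\Omega(\tau\log k)$. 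The additional $\log|B|$ term in $\Delta$ is present to drive each individual summand below $|B|^{-2}$, thereby absorbing the $O(|B|)$-fold outer union bound over values of $a$. With $\Delta$ as prescribed, $\Pr[\text{failure}]<1$, and the probabilistic method produces a graph $H$ satisfying the lemma.
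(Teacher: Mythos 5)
Your plan is, in outline, the same as the paper's sketch: construct $H$ randomly with each $b\in B$ picking $\Delta$ uniform neighbours in $A$, and kill all bad configurations with a first-moment union bound. The only substantive difference is the choice of index — you parametrize by $a=|A'|$, whereas the paper parametrizes by the shadow size $x=|B'|$, pairing each $x$ with $|A'|=\tau x$.

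There is, however, a slip in your accounting of the union bound. The nonzero terms of your outer sum run over $a\in\{0,\dots,\min\{\lfloor(1-1/k)|A|\rfloor,\,\tau|B|\}\}$, not over $O(|B|)$ values: each shadow size $s=\lfloor a/\tau\rfloor+1$ corresponds to a window of about $\tau$ consecutive values of $a$, and the range of $a$ can extend up to $\tau|B|$ before $\binom{|B|}{s}$ vanishes (and the lemma allows $|A|\gg|B|$, so that range need not be trivially short). Consequently, pushing each summand below $|B|^{-2}$ only bounds the sum by $O(\tau/|B|)$, which need not be less than $1$. The repair is cheap: either reparametrize by $s$ as the paper does, so the sum has at most $|B|$ terms and each term need only be below $1/|B|$; or keep your parametrization but drive each summand below $1/(\tau|B|)$ — the extra $\log\tau$ this costs is already absorbed by the $k\log\tau$ term in $\Delta$ since $k\ge 2$. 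With either fix, $\Delta=O(k\log\tau+\tau\log k+\log|B|)$ still suffices and the result follows.
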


The proof of \lemref{shrink} is similar to, but somewhat more involved than, the proof of \lemref{expand}.
Each element in $B$ chooses $\Delta$ random neighbours in $A$.  Then, one shows that, for each $x\in\{1,\ldots,\min\{|B|, (1-1/k)|A|/\tau\}\}$,
\[
    \binom{n}{\tau x}\binom{|B|}{x}(\tau x/n)^{\Delta x} < 1/|B|
\]
for some $\Delta=O(k\log\tau +\tau\log k + \log|B|)$.

\subsection{Fair-Split Trees and Well-Separated Pair Decompositions}

For two points $p,q\in\R^d$, $\dist(p,q)$, denotes the Euclidean distance
between $p$ and $q$. For two sets $P,Q\subset\R^d$, the distance between
$P$ and $Q$ is $\dist(P,Q)=\min\{\dist(p,q):p\in P, q\in Q\}$.  For a
point set $P\subset\R^d$, the \emph{diameter} of $P$ is denoted by
$\diam(P)=\max\{\dist(p,q):p,q\in P\}$.

For a rooted binary tree $T$, $L(T)$ denotes the set of leaves in
$T$. We use the convention that, if $T$ consists of a single node $u$,
then $L(T)=\{u\}$. The \emph{size} of $T$, denoted $|T|$ is the number of
leaves, $|L(T)|$, of $T$. For a node $u$ in $T$, $T_u$ denotes the subtree
of $T$ rooted at $u$.  We say that $T$ is \emph{full} if each non-leaf
node of $T$ has exactly two children.

A \emph{fair-split tree} $T$ is a full binary tree whose leaves are
points in $\R^d$ and whose nodes have the following \emph{fair-split
property}: We let $R(T)$ denote the minimum axis-aligned bounding
box of $L(T)$ and we let $\diam'(T)$ denote the sum of the side
lengths of $R(T)$.  For any node $w$ with parent $x$, $\diam'(T_w)
\le (1-1/(2d))\diam'(T_x)$.\footnote{Traditionally, fair-split trees
are described as splitting $R(x)$ by bisecting its longest side.
This obviously implies that $\diam'(w)\le 1-(1/(2d))\diam'(x)$.} It is
worth noting that $\diam(L(T))$ and $\diam'(T)$ are bounded by each other:
\[
	\diam(L(T)) \le \diam'(T) \le d\cdot\diam(L(T)) \enspace .
\]	
For any $n$-point set $P\subset\R^d$, a fair-split tree for $P$ can be
computed in $O(dn\log n)$ time \cite{callahan.kosaraju:decomposition}.

For a finite point set $P\subset\R^d$ and any $s>0$, a
\emph{well-separated pair decomposition (WSPD)} of $P$ is a set of pairs
$\{(A_i,B_i):i\in\{1,\ldots,m\}\}$ with the following properties:
\begin{enumerate}
  \item For every $i\in\{1,\ldots,m\}$, 
    $\dist(A_i,B_i)\ge s\cdot\max\{\diam(A_i),\diam(B_i)\}$.
  \item For every pair $p,q\in P$ there exists exactly one
    $i\in\{1,\ldots,m\}$ such that $p\in A_i$ and $q\in B_i$, or $q\in A_i$
    and $p\in B_i$.
\end{enumerate}
Well-separated pair decompositions were introduced by Callahan and
Kosaraju \cite{callahan.kosaraju:decomposition}, who construct them
using fair-split trees.

\begin{thm}[Callahan and Kosaraju 1995]\thmlabel{wspd}
  For any constant $d\ge 1$, any $s\ge 1$ and any $n$-point set
  $P\subset\R^d$ with fair split tree $T=T(P)$, there exists a WSPD
  $\{(A_i,B_i):i\in\{1,\ldots,m\}\}$ of $P$ with size $m\in O(s^d n)$.
  Furthermore, each pair $(A_i,B_i)=(L(T_{a_i}),L(T_{b_i}))$ where $a_i$
  and $b_i$ are nodes of $T$.
\end{thm}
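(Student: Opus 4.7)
The plan is to follow Callahan and Kosaraju's original recursive construction. I would first set up a procedure $\textsc{Pairs}(a,b)$ that takes two nodes $a,b$ of the fair-split tree $T=T(P)$ and either emits $(L(T_a),L(T_b))$ as a WSPD pair if
\[
  \dist(R(T_a),R(T_b)) \ge s\cdot\max\{\diam(L(T_a)),\diam(L(T_b))\},
\]
or otherwise (assuming without loss of generality that $\diam'(T_a)\ge\diam'(T_b)$) recurses on $\textsc{Pairs}(a_1,b)$ and $\textsc{Pairs}(a_2,b)$, where $a_1,a_2$ are the children of $a$ in $T$. The top-level procedure then iterates over every internal node $u$ of $T$ with children $v,w$ and invokes $\textsc{Pairs}(v,w)$. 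The output is the union of all emitted pairs.

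For Property~2 (every pair of points is covered exactly once), I would fix a pair $p,q\in P$ and let $u$ be their lowest common ancestor in $T$, with children $v,w$; say $p\in L(T_v)$ and $q\in L(T_w)$. Only the top-level call $\textsc{Pairs}(v,w)$ can produce a pair containing both $p$ and $q$, since any other top-level call begins with two nodes whose leaf sets do not simultaneously contain $p$ and $q$. A routine induction on the recursion depth of $\textsc{Pairs}(v,w)$ shows that exactly one emitted pair $(A_i,B_i)$ satisfies $p\in A_i$, $q\in B_i$ (or vice versa): the recursion always descends into the unique child whose subtree still contains the relevant point. Property~1 is immediate from the emission condition.

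For the size bound, the main ingredient is a packing argument. I would fix a node $b$ of $T$ and count the number of emitted pairs $(A_i,B_i)$ with $b_i=b$; summing this count over all $O(n)$ nodes of $T$ gives $m$. Whenever $\textsc{Pairs}(a,b)$ is called on a non-separated pair and $\diam'(T_a)\ge\diam'(T_b)$, the recursion splits $a$, so at the moment $b$ is ``fixed'' and the other side $a_i$ is emitted we have $\diam'(T_{a_i})\le\diam'(T_b)$; moreover, the parent call on $(\textrm{parent}(a_i),b)$ failed to be separated, which combined with the fair-split property $\diam'(T_{a_i})\ge (1-1/(2d))\diam'(T_{\textrm{parent}(a_i)})$ gives a two-sided bound $\diam'(T_{a_i})=\Theta(\diam'(T_b))$. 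The same non-separation of the parent call forces $R(T_{a_i})$ to lie within Euclidean distance $O(s\cdot\diam'(T_b))$ of $R(T_b)$. Because $T$ is a fair-split tree, the bounding boxes $R(T_{a_i})$ of distinct such nodes $a_i$ have pairwise disjoint interiors (they arise from nested axis-aligned splits), and each has volume $\Omega(\diam'(T_b)^d)$. They therefore all pack into a region of volume $O((s\cdot\diam'(T_b))^d)$, giving $O(s^d)$ such $a_i$'s. A symmetric count with the roles of $a$ and $b$ swapped covers the remaining pairs, yielding $m=O(s^d n)$.

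The main obstacle is justifying the packing step cleanly: I have to verify that, under the recursion's tie-breaking rule, the $\Theta(\diam'(T_b))$ two-sided size bound on the $a_i$'s really does hold (which is where the $(1-1/(2d))$ factor from the fair-split property is essential), and that the bounding boxes of the relevant subtrees are genuinely interior-disjoint so that a straightforward volume comparison in $\R^d$ converts ``contained in an $O(s\cdot\diam'(T_b))$-neighbourhood of $R(T_b)$'' into the desired $O(s^d)$ count. Everything else is bookkeeping over the $O(n)$ nodes of~$T$.
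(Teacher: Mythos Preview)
The paper does not prove this theorem; it is quoted as a background result of Callahan and Kosaraju, so there is no in-paper proof to compare against.  Your outline follows their original construction, but the packing step contains a real gap, and it is exactly the step you flag as ``the main obstacle''.

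You invoke ``the fair-split property $\diam'(T_{a_i})\ge (1-1/(2d))\diam'(T_{\textrm{parent}(a_i)})$''.  That inequality is backwards: the definition in the paper (and the standard one) says $\diam'(T_w)\le (1-1/(2d))\diam'(T_x)$ for a child $w$ of $x$.  This is not a harmless slip you can repair by flipping the sign.  There is \emph{no} lower bound on $\diam'(T_{a_i})$ in terms of $\diam'(T_{\textrm{parent}(a_i)})$, because $R(T_{a_i})$ is the \emph{minimum} axis-aligned bounding box of the leaves in $T_{a_i}$, and those leaves may be arbitrarily tightly clustered inside the parent box (indeed, if $a_i$ happens to be a leaf then $\diam'(T_{a_i})=0$).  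Hence your two-sided estimate $\diam'(T_{a_i})=\Theta(\diam'(T_b))$ fails, your claim ``each has volume $\Omega(\diam'(T_b)^d)$'' is false in general, and the volume-packing count collapses.

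The fix in Callahan--Kosaraju is to pack, not the tight boxes $R(T_{a_i})$, but the ``outer'' cells produced by the recursive hyperplane splits (each node's cell is the half of its parent's box on its side of the splitting hyperplane, \emph{before} shrinking to the minimum bounding box).  Those cells are interior-disjoint for incomparable nodes, each contains the corresponding $R(T_{a_i})$, and---because a fair split always bisects the longest side---the longest side of $a_i$'s cell is at least half the longest side of its parent's box, which is at least $\diam'(T_b)/d$ since the parent was the side chosen to be split.  A further aspect-ratio argument on these cells (again using that the longest side is always bisected) then gives the volume lower bound you need, and the $O(s^d)$ count follows.  Your interior-disjointness and distance observations are fine; it is only the volume lower bound that requires the outer cells rather than $R(T_{a_i})$.
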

We call the WSPD guaranteed by \thmref{wspd} a WSPD of $P$ \emph{using}
$T$.  In his thesis, Callahan proves an additional useful result about
well-separated pair decompositions \cite[Section~4.5]{callahan:dealing}:

\begin{lem}[Callahan 1995]\lemlabel{wspd-ii}
  In the WSPD of \thmref{wspd},
   $\sum_{i=1}^m\min\{|A_i|,|B_i|\} = O(s^d n\log n)$.
\end{lem}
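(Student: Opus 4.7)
The plan is to prove \lemref{wspd-ii} by swapping the order of summation: writing $N(p)$ for the number of WSPD pairs in which $p$ lies on the smaller side, we have
\[
\sum_{i=1}^m \min\{|A_i|,|B_i|\} \;=\; \sum_{p \in P} N(p),
\]
so the lemma reduces to showing $N(p) = O(s^d \log n)$ for each $p \in P$.

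Fix $p$. Any pair counted by $N(p)$, say with $p \in A_i$ and $|A_i|\le|B_i|$, has $a_i$ as an ancestor of $p$ in $T$ with $|L(T_{a_i})| \le n/2$. A structural property of the Callahan--Kosaraju construction is that whenever the algorithm emits a pair $(a,b)$, the quantities $\diam'(T_a)$ and $\diam'(T_b)$ are within a constant factor depending only on $d$: the recursion always refines the side of larger $\diam'$ until the pair is well-separated, so one more level of refinement would have been performed if the ratio were too large. Consequently, for a fixed ancestor $v$ of $p$, the partners $b$ with $(v,b)$ in the WSPD and $v$ on the smaller side satisfy $\diam'(T_b) = \Theta(\diam'(T_v))$ and their regions $R(T_b)$ are pairwise disjoint, all lying outside a ball of radius $\Omega(s\cdot\diam'(T_v))$ around $R(T_v)$. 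A standard volume-packing argument in $\R^d$ then bounds the number of such partners by $O(s^d)$.

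The main obstacle is that fair-split trees need not be depth-balanced, so $p$ can have $\omega(\log n)$ ancestors; multiplying $O(s^d)$ by depth is far too weak. To overcome this, I would group the relevant ancestors of $p$ by dyadic size classes $[2^r,2^{r+1})$ for $r \in \{0,\ldots,\lfloor\log_2(n/2)\rfloor\}$, yielding only $O(\log n)$ classes. Within a single class, the ancestors of $p$ form a chain $v_1 \subset v_2 \subset \cdots \subset v_t$, and by the fair-split property $\diam'(T_{v_j})$ grows geometrically up the chain. The crux is then to show that the partners contributed by the \emph{entire} chain still pack into $O(s^d)$ disjoint regions: each partner $b$ sits in an annulus around $R(T_{v_{j(b)}})$ of width $\Theta(\diam'(T_{v_{j(b)}}))$ and inner radius $\Omega(s\cdot\diam'(T_{v_{j(b)}}))$, and the geometric growth of diameters forces these annuli into nested disjoint ``shells'' whose total volume, relative to the per-partner volume, still admits an $O(s^d)$ packing bound. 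Summing $O(s^d)$ over the $O(\log n)$ size classes yields $N(p) = O(s^d \log n)$, completing the proof. I expect the chain-packing step to require the most care; the remainder is routine once the diameter-comparability property of Callahan--Kosaraju pairs is in hand.
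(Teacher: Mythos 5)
The paper does not prove this lemma: it cites Callahan's thesis (Section~4.5) and uses the statement as a black box, so there is no in-paper proof to compare your attempt against. Evaluating the proposal on its own terms, the opening double-counting step (rewriting $\sum_i\min\{|A_i|,|B_i|\}$ as $\sum_{p\in P}N(p)$ and aiming for $N(p)=O(s^d\log n)$) is a reasonable and plausibly the ``right'' reduction, but the geometric core of the argument has a genuine error and a genuine gap.

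The error is the claimed \emph{diameter comparability} of emitted pairs: that whenever $(a,b)$ is output, $\diam'(T_a)=\Theta(\diam'(T_b))$ with the constant depending only on $d$. This is false. When FindPairs splits the larger-$\diam'$ side, the resulting children of a node $x$ are only guaranteed to satisfy $\diam'(T_w)\le(1-\tfrac{1}{2d})\diam'(T_x)$; there is no lower bound, and a child's bounding box can be arbitrarily small or even degenerate. In particular one side of an emitted pair can be a single point with $\diam'=0$ paired with a set of positive diameter, so the ratio is unbounded in both directions. The ``one more level of refinement would have been performed'' heuristic does not rescue this, because refinement bounds the diameter of the \emph{parent} of the last-split side from \emph{above} by the other side, not the child from below. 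Once the comparability claim goes, the ``partners sit in an annulus of width $\Theta(\diam'(T_v))$'' picture and the resulting $O(s^d)$ per-ancestor packing bound no longer follow; indeed it is not true in general that a fixed split-tree node participates in only $O(s^d)$ WSPD pairs, so a per-ancestor bound cannot be the right unit of charge.

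The gap is the chain-packing step within a dyadic size class, which you explicitly defer but which is the entire content of the lemma. Ancestors of $p$ in one size class $[2^r,2^{r+1})$ can form a very long chain, and by the fair-split property their $\diam'$ values can span an unbounded range across that chain; it is therefore not clear that the annuli form ``nested disjoint shells'' admitting a uniform $O(s^d)$ packing bound, and you give no argument for it. Without this step (and with the comparability claim removed), the proof reduces to $O(\log n)$ size classes times an unproved per-class count, which does not establish the lemma. A correct proof has to replace the comparability claim with the actual one-sided relation the algorithm provides (that the \emph{parent} of the last-split side dominates the other side's $\diam'$) and combine it with a charging scheme that aggregates across a chain of ancestors, which is where Callahan's argument does its real work.
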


\section{The Construction}

In this section we describe our $\epsilon$-resilient
$t$-spanner construction for an $n$-point set $P\subset\R^d$.
Throughout this section, $T$ is a fair-split tree for $P$ and
$W=\{(A_i,B_i):i\in\{1,\ldots,m\}\}$ is an $s$-well-separated pair
decomposition for $P$ using $T$.

Our goal is to construct a $(1+\epsilon)$-robust $t$-spanner $G$
of $P$.  At times we will deal with an arbitrary subset $F\subseteq
P$ whose elements are \emph{faulty} and we will identify a superset
$F^+\supseteq F$ so that $G-F$ is a $t$-spanner of $P\setminus F^+$.
We will say that the elements of $F^+\setminus F$ are \emph{abandoned}.
Our construction of $G$ proceed in two phases, which are described
in Sections~\ref{sec:exploding} and \ref{sec:navigating}, respectively:

\begin{enumerate}
  \item We first show how to construct a graph $G_T$ where most of the
  points in $p\in P\setminus F$ have the following property: For every
  ancestor $u$ of $p$ in $T$, and for the vast majority of points $q\in
  L(T_u)\setminus F$, $G_T-F$ contains a path from $p$ to $q$ of length
  $O(\diam'(T_u))$.  When a point $p\in P\setminus F$ has this property,
  we say that $p$ is able to \emph{explode into} $u$.

  We define $F^+_T$ as the set of points $p\in P$ such that $p$ has an
  ancestor $u$ and $p$ is \emph{un}able to explode into $u$.  We will
  have to abandon the points in $F^+_T$.  The parameters of $G_T$ are
  chosen so that $F^+_T\setminus F$ has size at most $\epsilon|F|$.
  and the graph $G_T$ has $O(n\log^2 n\log\log n)$ edges.

   \item Next, we consider the WSPD $W$. For points $p\in A_i\setminus
   F^+_T$ and $q\in B_i\setminus F^+_T$ to have a spanner path between
   them in $G-F$, it is sufficient that $G$ contains an edge $qq'$
   with $q'\in A_i\setminus F^+_T$.  This is because both $p$ and $q'$
   can explode into $A_i$ and the subsets of $A_i\setminus F$ that $p$
   and $q'$ can explode into are so large that they must have at least
   one edge joining them.

   The third step in our construction, therefore, is to consider each pair
   $(A_i,B_i)\in W$ with $|A_i|\ge |B_i|$ and construct an expander $H$
   from $B_i$ into $A_i$.  By doing this carefully, we ensure that
  \begin{enumerate}
     \item $F^+_T$ contains most of $A_i$; or
     \item $S_{H}(F^+_P)$ is much smaller than $F^+_P\cap A_i$.
  \end{enumerate}
  In the former case, we can abandon all the points in $A_i$ without
  abandoning too many additional points. In the latter case, we can
  abandon the points in $S_{H}(F^+_P)$ without abandoning too many additional 
  points.
\end{enumerate}

\subsection{Exploding in the Fair-Split Tree}
\seclabel{exploding}

In this section we describe the graph $G_T$ that allows most of the
points in $P\setminus F$ to explode into any of their ancestors in $T$.

Consider the following recursively constructed graph $G_{T}$ whose vertex set
is $P=L(T)$.  If $|T| \le \kappa$ for some constant $\kappa$, then $G_T$
is the complete graph on $L(T)$. For our particular application, we will
choose $\kappa\ge 5$.  
If $|T|>\kappa$, let $u_0$ be a node of $T$ with the property that
$|T|/3\le |T_{u_0}|\le 2|T|/3$.  The existence of $u_0$ (or rather
the edge from $u_0$ to its parent) is a standard result on binary
trees.\footnote{Proof: Begin by setting $v_0$ to the root of $T$ and then
repeatedly set $v_{i+1}$ to be the child of $v_i$ whose subtree contains
at least half the leaves of $T_{v_i}$.  The smallest index $i$ for which
$|T_{v_i}|\le 2|T|/3$ yields the desired node $u_0=v_i$.} 

Fundamental to the analysis in this section is the \emph{rank} of a
tree $T$, defined as $\rank(T)=\floor{\log_{3/2}|T|}$.  Note that, for
$|T|\ge\kappa\ge 5$, $\rank(T_{u_0}) \ge \floor{\log_{3/2}(5/3)} \ge 1$.

Let $T_1$ be the full binary tree obtained from $T-T_{u_0}$ by contracting
an edge incident to the unique non-leaf node of $T-T_{u_0}$ that has only
one child.  The graph $G_{T}$ contains an expander $H_T=(L(T),E_T)$. This
expander has parameters $\Delta>1$, $\alpha, \beta,\zeta,\eta > 0$ and is
constructed so that it satisfies the following properties:
\begin{enumerate}
  \item[(PR1)] For any $X\subset L(T_{u_0})$ with
    $|X|<(1-\beta/\Delta)|T_{u_0}|$, 
    \[ |S_{H_T}(X)|\le (\alpha/\Delta)|X| \enspace . \]

  \item[(PR2)] For any set $Y\subset L(T)$ with $|Y|\ge
    (\zeta/\Delta)L(T)$, \[ |N_{H_T}(Y)|\ge (1-\eta/\Delta)|T| \enspace .\]
\end{enumerate}
Informally, Property~(PR1) tells us that, if some subset $X$ of
$T_{u_0}$ becomes disabled, then this only prevents a much smaller subset
$S_{H_T}(X)$ of $T_{1}$ from accessing $T_{u_0}$.  Property~(PR2) tells
us that if some point $p$ can reach a $\zeta/\Delta$ fraction of the
points in $L(T)$ then $p$ can reach nearly all the points in $L(T)$.

In our construction, $\Delta=\ceil{\log_{3/2} n}$ and the remaining
parameters are small values that are upper bounded by some function
of $\epsilon$. In particular, for any constant $\epsilon >0$, these
parameters are also constant. Note that we distinguish here between $n$
and $|T|$. This is because, in recursive calls $\Delta=\ceil{\log_{3/2} n}$
remains fixed even though the recursive input has size smaller than $n$.

After constructing $H_T$, we recursively construct $G_{T_{u_0}}$
and $G_{T_1}$ and add the edges of each of the resulting graphs to
$G_{T}$. This concludes the description of the graph $G_T$.

\begin{clm}
  For any constants $\alpha,\beta,\zeta,\eta>0$, there exists a graph $H_T$ with
  $O(|T|\Delta\log\Delta)$ edges that satisfies Properties~(PR1) and (PR2).
\end{clm}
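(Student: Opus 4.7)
Since $|T_{u_0}|\in[|T|/3,2|T|/3]$ and the edge-contraction that forms $T_1$ preserves the number of leaves, both $L(T_{u_0})$ and $L(T_1)=L(T)\setminus L(T_{u_0})$ have size $\Theta(|T|)$. My plan is to build $H_T$ as the edge-union $H_1\cup H_2$ of two graphs on $L(T)$, each engineered for one of the two properties, and then appeal to the monotonicity principles that adding edges can only shrink shadows and enlarge neighbourhoods.

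For (PR1), I would apply Lemma~\lemref{shrink} with $A=L(T_{u_0})$, $B=L(T_1)$, $k=\Delta/\beta$, and $\tau=\Delta/\alpha$. Since $|A|=\Omega(|B|)$, the lemma yields a bipartite graph $H_1$ such that every $X\subseteq L(T_{u_0})$ with $|X|\le(1-\beta/\Delta)|T_{u_0}|$ has $|S_{H_1}(X)|\le(\alpha/\Delta)|X|$, which is exactly (PR1). The edge count is $O\bigl(|B|(k\log\tau+\tau\log k+\log|B|)\bigr)$; because $\Delta=\lceil\log_{3/2} n\rceil$ remains fixed throughout the recursion while $|T|\le n$, we have $\log|B|\le\log n=O(\Delta)$, so the bound collapses to $O(|T|\Delta\log\Delta)$.

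For (PR2), I would invoke Lemma~\lemref{expand} with two disjoint copies of $L(T)$ playing the roles of $A$ and $B$, $\ell=\Delta/\zeta$, and $k=\Delta/\eta$. Identifying each vertex in $A$ with its copy in $B$ (discarding self-loops and duplicate edges) gives a graph $H_2$ on $L(T)$ in which every $Y\subseteq L(T)$ with $|Y|\ge(\zeta/\Delta)|T|$, viewed as a subset of the $B$-side, satisfies $|N_{H_2}(Y)|\ge(1-\eta/\Delta)|T|$, which is exactly (PR2). Its edge count is $O(|T|(k\log\ell+\ell\log k))=O(|T|\Delta\log\Delta)$.

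Setting $H_T=H_1\cup H_2$ yields a graph with $O(|T|\Delta\log\Delta)$ edges. Property (PR1) survives the union because any vertex whose full $H_T$-neighbourhood lies in $X$ certainly has its $H_1$-neighbourhood in $X$, so $S_{H_T}(X)\subseteq S_{H_1}(X)$. Property (PR2) survives because $N_{H_T}(Y)\supseteq N_{H_2}(Y)$. I expect the main obstacle to be the logarithmic bookkeeping—absorbing the additive $\log|B|$ term from Lemma~\lemref{shrink} into the $\Delta\log\Delta$ budget—which relies crucially on the recursion-wide invariant $\Delta=\Theta(\log n)\ge\log|T|$.
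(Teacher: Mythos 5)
Your proof matches the paper's construction exactly: the paper also applies \lemref{shrink} to $(A=L(T_{u_0}),B=L(T_1))$ with $k=\Delta/\beta$, $\tau=\Delta/\alpha$ for (PR1), applies \lemref{expand} to $(A=L(T),B=L(T))$ with $k=\Delta/\eta$, $\ell=\Delta/\zeta$ for (PR2), and takes the union. Your extra bookkeeping---absorbing the $\log|B|=O(\Delta)$ term from \lemref{shrink}, the copy-and-identify step for \lemref{expand}, and the monotonicity of shadows and neighbourhoods under edge addition---makes explicit what the paper leaves implicit, but the route is the same.
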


\begin{proof}
  To satisfy Property~(PR1), $H_T$ contains an expander described
  by \lemref{shrink} for the pair $(A=L(T_{u_0}),B=L(T_1))$ with
  parameter $k=\Delta/\beta$ and $\tau=\Delta/\alpha$.  This graph has
  $O(|T|\Delta\log\Delta)$ edges.

  To satisfy Property~(PR2), $H_T$ contains an expander described by
  \lemref{expand} for the pair $(A=L(T),B=L(T))$ with
  parameters $k=\Delta/\eta$ and $\ell=\Delta/\zeta$. This graph also
  has $O(|T|\Delta\log\Delta)$ edges.
\end{proof}

\begin{clm}\clmlabel{edges-t}
  The graph $G_{T}$ has $O((\Delta\log\Delta)|T|\log |T|)$ edges.
\end{clm}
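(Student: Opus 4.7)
The plan is to set up and solve the natural recurrence for the edge count of $G_T$. Let $E(N)$ denote the maximum number of edges in $G_T$ over all fair-split trees $T$ with $|T|=N$. From the construction, $G_T$ consists of the expander $H_T$ together with the edges from the two recursive graphs $G_{T_{u_0}}$ and $G_{T_1}$. By the previous claim, $H_T$ contributes $O(|T|\Delta\log\Delta)$ edges, and by construction $|T_{u_0}|+|T_1|=|T|$ with both sizes lying in $[|T|/3,\,2|T|/3]$. This gives the recurrence
\[
    E(N) \le E(N_1) + E(N-N_1) + c\,N\Delta\log\Delta
\]
for some constant $c$, where $N/3 \le N_1 \le 2N/3$, with the base case $E(N)=O(1)$ for $N\le\kappa$.

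The key structural observation I would use is that the recursion splits $L(T)$ into two disjoint subsets whose union is $L(T)$, and each recursive call again does the same. Hence if we look at all subproblems at a fixed depth $i$ of the recursion, their leaf sets partition $L(T)$, so the sum of their sizes is exactly $|T|$. Consequently, the total number of expander edges contributed at depth $i$ is $O(|T|\Delta\log\Delta)$, independent of $i$.

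It remains to bound the recursion depth. Since every recursive subproblem has size at most $(2/3)|T|$ times the size of its parent, the depth is at most $\log_{3/2}|T|=O(\log|T|)$ before reaching the base case of size $\le\kappa$. Summing the per-depth contribution over all depths gives
\[
    E(|T|) \;=\; O(\log|T|)\cdot O(|T|\Delta\log\Delta) \;=\; O\!\left((\Delta\log\Delta)\,|T|\log|T|\right),
\]
which is the claimed bound.

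There is no real obstacle here; the only mild subtlety is checking that the split at $u_0$ really is balanced enough to give an $O(\log|T|)$-depth recursion, which follows from the choice $|T|/3\le|T_{u_0}|\le 2|T|/3$ (and the fact that contracting the degree-$1$ node to form $T_1$ does not change the leaf count, so $|T_1|=|T|-|T_{u_0}|$ also lies in $[|T|/3,\,2|T|/3]$).
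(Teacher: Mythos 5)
Your proof is correct and takes essentially the same approach as the paper: at each recursion level the subproblems partition $L(T)$, so each level contributes $O(|T|\Delta\log\Delta)$ edges, and the $2/3$-balanced split gives $O(\log|T|)$ levels.
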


\begin{proof}
  The graph $H_T$ has
  $O(|T|\Delta\log\Delta)$ edges.  The recursive constructions are on two trees $T_{u_0}$
  and $T_1$ where $|T_{u_0}|+|T_1|=|T|$ and $\max\{|T_{u_0}|,|T_1|\}\le
  2|T|/3$. It follows that the depth of recursion is at most
  $\log_{3/2}|T|$ and each level of recursion contributes a total of
  $O((\Delta\log\Delta)|T|)$ edges for a total of $O((\Delta\log\Delta)|T|\log|T|)$ edges.
\end{proof}

Recall that $\rank(T)=\floor{\log_{3/2} |T|}$ and observe that, in
the preceding construction, $\rank(T_{u_0}) \le \rank(T)-1$ and
$\rank(T_1)\le\rank(T)-1$.  Let $F$ be an arbitrary subset of $P$.  We say
that $T$ is \emph{$F$-dense} if $|L(T)\cap F|\ge (1-\delta\rank(T)/\Delta)|T|$
for some constant $\delta$ to be discussed shortly.  Define the set
$F^+_T$, recursively, as follows (here $u_0$ and $T_1$ are defined as above):

\begin{enumerate}
  \item If $T$ is $F$-dense, then $F^+_T\gets L(T)$. 
     \newline [Too many points in $T$ are faulty so we abandon all points in $T$.]
  \item $F^+_T\gets F^+_{T_{u_0}}\cup F^+_{T_1}$. 
     \newline [Recursive computation on $T_{u_0}$ and $T_1$.]
  \item If $|F^+_{T_{u_0}}|\le (1-\beta/\Delta)|T_{u_0}|$
  \begin{enumerate}
     \item then $F^+_T\gets F^+_T\cup S_{H_T}(F^+_{T_{u_0}})$. 
         \newline [Abandon points in $T_1$ that can only reach $T_{u_0}$ through points abandoned in Step~2.]
     \item Otherwise, $|F^+_{T_{u_0}}|> (1-\beta/\Delta)|T_{u_0}|$,
          and $F^+_T\gets F^+_T\cup L(T_{u_0})$.
        \newline [Too many points in $T_{u_0}$ are already abandoned, abandon everything in $T_{u_0}$.]
  \end{enumerate}
\end{enumerate}

Below, we claim that $|F^+_T|\le (1+\epsilon \rank(T)/\Delta)|F|$, for
some small $\epsilon >0$.  Before diving into the proof, we first give an
informal sketch.  In Step~1, the definition of $F$-density ensures that,
if $T$ is $F$-dense, then it safe to discard all of $T$.  By induction,
Step~2 obviously produces a sufficiently small set $F^+_T$.  In fact,
since $\rank(T_{u_0})$ and $\rank(T_1)$ are both smaller than $\rank(T)$,
Step~2 produces a set that is smaller than necessary. Specifically, at this
point we can afford to add an additional $(\epsilon/\Delta)|F\cap L(T)|$
elements to $F^+_T$.  The condition in Step~3 ensures that, in either
of the two cases, the number of elements we add to $F^+_T$ is, indeed,
at most $(\epsilon/\Delta)|F\cap L(T)|$.  In Step~3(a), we know that
the shadow of $F^+_{T_{u_1}}$ in $T_1$ is sufficiently small to add it
to $F^+_T$.  In Step~3(b) we know that $F^+_{T_{u_0}}$ is so large that
we can add the rest of it to $F^+_T$.

\begin{clm}\clmlabel{f-plus}
   For any constant $\epsilon>0$ there are constants
   $\alpha,\beta,\zeta,\eta >0$ such that, for any $F\subseteq P$,
   $|F^+_T| \le (1+\epsilon\rank(T)/\Delta)|F\cap L(T)|$.
\end{clm}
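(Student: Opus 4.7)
The plan is to prove the bound by induction on $\rank(T)$. The base case, where $|T|\le\kappa$ and $G_T$ is the complete graph on $L(T)$, is immediate: no non-faulty point needs to be abandoned, so taking $F^+_T=F\cap L(T)$ trivially satisfies the inequality.

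For the inductive step I would handle in turn the three branches in the recursive definition of $F^+_T$. If $T$ is $F$-dense then $F^+_T=L(T)$, and I need $|T|\le(1+\epsilon\rank(T)/\Delta)|F\cap L(T)|$. The density definition gives $|F\cap L(T)|\ge(1-\delta\rank(T)/\Delta)|T|$, and since $\rank(T)/\Delta\le 1$, choosing $\delta$ small enough relative to $\epsilon$ (say $\delta\le\epsilon/(1+\epsilon)$) makes $1/(1-\delta\rank(T)/\Delta)\le 1+\epsilon\rank(T)/\Delta$, which closes the case.

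If $T$ is not $F$-dense, I apply the inductive hypothesis to $T_{u_0}$ and $T_1$, whose ranks are both at most $\rank(T)-1$, using the fact that $L(T_{u_0})$ and $L(T_1)$ partition $L(T)$. Summing the two resulting inequalities yields
\[
  |F^+_{T_{u_0}}\cup F^+_{T_1}|\le \bigl(1+\epsilon(\rank(T)-1)/\Delta\bigr)|F\cap L(T)|,
\]
leaving a budget of $(\epsilon/\Delta)|F\cap L(T)|$ for any elements newly added in Step~3. In Sub-case~3(a), Property~(PR1) bounds the set added by $|S_{H_T}(F^+_{T_{u_0}})|\le(\alpha/\Delta)|F^+_{T_{u_0}}|$, and the inductive upper bound on $|F^+_{T_{u_0}}|$ inflates this only by a factor of at most $1+\epsilon$, so the contribution fits the budget for $\alpha$ small enough in terms of $\epsilon$. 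In Sub-case~3(b) the elements added number at most $|T_{u_0}|-|F^+_{T_{u_0}}|\le(\beta/\Delta)|T_{u_0}|$, and combining the sub-case assumption $|F^+_{T_{u_0}}|>(1-\beta/\Delta)|T_{u_0}|$ with the inductive upper bound on $|F^+_{T_{u_0}}|$ gives $|T_{u_0}|\le 2(1+\epsilon)|F\cap L(T_{u_0})|$ (assuming $\beta/\Delta\le 1/2$), so this too fits the budget once $\beta$ is chosen small enough.

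The main obstacle is coordinating the constants: $\delta$, $\alpha$, and $\beta$ must be set simultaneously in terms of $\epsilon$, and one has to verify that Step~3 really consumes only the $(\epsilon/\Delta)|F\cap L(T)|$ slack left over after Step~2, not the full $\epsilon\rank(T)/\Delta$ gap. Once these constants are pinned down, the induction telescopes over at most $\rank(T)\le\Delta$ levels, giving the claimed factor of $(1+\epsilon\rank(T)/\Delta)$ and, for the root, the top-level bound $|F^+_T|\le(1+\epsilon)|F|$ needed for $\epsilon$-resilience.
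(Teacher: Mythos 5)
Your proposal mirrors the paper's own argument: induction on $\rank(T)$, with the $F$-dense case handled by the density definition, and the non-dense case split into the two branches of Step~3, each charged against the $(\epsilon/\Delta)|F\cap L(T)|$ slack that opens up because both recursive subtrees have strictly smaller rank. The constant choices ($\delta\le\epsilon/(1+\epsilon)$, $\alpha\le\epsilon/(1+\epsilon)$, $\beta$ small) are slight variants of the paper's ($\delta\le\epsilon/2$, $\alpha\le\epsilon/(\epsilon+1)$, $\beta\le\epsilon/4$) but the structure and the key inequalities are the same.
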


\begin{proof}
  The proof is by induction on $\rank(T)$. If $|T|=1$, the claim is
  obvious. For $|T|\ge 2$, there are two cases to consider:
  \begin{enumerate}
    \item $T$ is $F$-dense. In this case $F^+_T=L(T)$.  
     Since $T$ if $F$-dense,  $|L(T)\cap F|\ge
     (1-\delta\rank(T)/\Delta)|T|$.  So
     \[
       |F^+_T|=|T|
	  \le \frac{|L(T)\cap F|}{1-\delta\rank(T)/\Delta} 
          \le (1+\epsilon\rank(T)/\Delta)|L(T)\cap F|
     \]
     provided that $\epsilon \ge 1/(1-\delta)-1$ (e.g., $\delta\le \epsilon/2$).

    \item $T$ is not $F$-dense. There are two subcases to consider:
    \begin{enumerate}
       \item $|F^+_{T_{u_0}}| \le (1-\beta/\Delta)|T_{u_0}|$.
         In this case, $F^+_T=F^+_{T_{u_0}}\cup F^+_{T_1} \cup S_{H_T}(F^+_{T_{u_0}})$.
         Recall that
        $\rank(T_{u_0}),\rank(T_1)\le\rank(T)-1$ so, by induction,
         \begin{align}
              |F^+_{T_{u_0}}| + |F^+_{T_{1}}| 
              & \le (1+\epsilon\rank(T_{u_0}))/\Delta)|F\cap L(T_{u_0})| 
                    +(1+\epsilon\rank(T_{1}))/\Delta)|F\cap L(T_{1})| \notag \\
              & \le (1+\epsilon(\rank(T)-1))/\Delta)|F\cap L(T_{u_0})| 
                    +(1+\epsilon(\rank(T)-1))/\Delta)|F\cap L(T_{1})|  \notag \\
              & = (1+\epsilon\rank(T)/\Delta)|F\cap L(T)| - (\epsilon/\Delta)|F\cap L(T)| \enspace .
            \eqlabel{inductio}
         \end{align}
         All that remains is to show that $|S_{H_T}(F^+_{T_{u_0}})|\le
         (\epsilon/\Delta)|F\cap L(T)|$. By Property~(PR1) of $H_T$,
          \begin{align*}
          |S_{H_T}(F^+_{T_{u_0}})| 
            & \le (\alpha/\Delta)|F^+_{T_{u_0}}| \\
            & \le (\alpha/\Delta)(1+\epsilon\rank(T_{u_0})/\Delta)|F\cap L(T_{u_0})| \\
            & \le (\alpha/\Delta)(1+\epsilon\rank(T_{u_0})/\Delta)|F\cap L(T)|\\
            & = (\alpha/\Delta+\alpha\epsilon\rank(T_{u_0})/\Delta^2)|F\cap L(T)| \\
            & \le (\alpha/\Delta+\alpha\epsilon/\Delta)|F\cap L(T)| 
    	     & \text{(for $\rank(T_{u_0})/\Delta\le 1$)} \\
    	& \le (\epsilon/\Delta)|F\cap L(T)| \enspace ,
       \end{align*}
       provided that $\alpha+\alpha\epsilon \le \epsilon$, i.e, 
       $\alpha \le \epsilon/(\epsilon+1)$.   

       \item $|F^+_{T_{u_0}}| > (1-\beta/\Delta)|T_{u_0}|$. In this case, $F^+_T=L(T_{u_0}) \cup F^+_{T_1}$ and
       \begin{align*}
          |F^+_T| 
            & = |T_{u_0}| + |F^+_{T_1}| \\
            & \le (1+2\beta/\Delta)|F^+_{T_{u_0}}| + |F^+_{T_1}|
              & \text{(for $\beta/\Delta\le 1/2$)}\\
            & = |F^+_{T_{u_0}}| + |F^+_{T_1}| + (2\beta/\Delta)|F^+_{T_{u_0}}| \\
            & \le (1+\epsilon\rank(T)/\Delta)|F\cap L(T)| - (\epsilon/\Delta)|F\cap L(T)| + (2\beta/\Delta)|F^+_{T_{u_0}}| 
             & \text{(as in \eqref{inductio})} \\
            & \le (1+\epsilon\rank(T)/\Delta)|F\cap L(T)| - (\epsilon/\Delta)|F\cap L(T)| + (4\beta/\Delta)|F\cap L(T_{u_0})|  \\
              & \qquad \text{(since $|F\cap L(T_{u_0})| \ge |F^+_{T_{u_0}}|/(1+\epsilon\rank(T_{u_0})/\Delta)\ge|F^+_{T_{u_0}}|/2$)}\\
            & \le (1+\epsilon\rank(T)/\Delta)|F\cap L(T)| - (\epsilon/\Delta)|F\cap L(T)| + (4\beta/\Delta)|F\cap L(T)| \\
              & \qquad \text{(since $L(T_{u_0})\subseteq L(T)$)} \\
            & \le (1+\epsilon\rank(T)/\Delta)|F\cap L(T)| \enspace ,
       \end{align*}
       provided that $\beta \le \epsilon/4$. \qedhere
     \end{enumerate}
   \end{enumerate}
\end{proof}

Next we show that any point of $P$ not in $F^+_T$ can explode into any of its
ancestors in $T$.

\begin{clm}\clmlabel{explode}
  Let $C=4d$ and let $a=\beta/2$.  For any node $u$ of $T$ and
  every point $p\in L(T_u)\setminus F^+_T$, there exists $X\subset L(T_u)$,
  $|X|\ge (1-a/\Delta)|T_u|-|F^+_T\cap L(T_u)|$ such that for every $q\in X$,
  $G_T-F$ contains a path from $p$ to $q$ of length at most $C\diam(T_u)$,
  for $C\le 4d$.
\end{clm}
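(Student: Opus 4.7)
The plan is to prove the claim by induction on $\rank(T_u)$.  The base case, when $|T_u|$ is at most the recursion cutoff $\kappa$, is immediate because the recursive construction of $G_T$ puts a complete graph on $L(T_u)$; a single edge of length at most $\diam(T_u)$ then reaches every non-faulty leaf of $T_u$, and $F\subseteq F^+_T$ gives the required size bound.

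For the inductive step, I first dispatch the two routine sub-cases in which $T_u$ is entirely contained in one of the recursive pieces $T_{u_0}$ or $T_1$.  In those cases the inductive hypothesis applied to the recursive construction on that piece transfers directly: paths in $G_{T_{u_0}}-F$ (respectively $G_{T_1}-F$) are also paths in $G_T-F$, and $F^+_{T_{u_0}}\subseteq F^+_T$ (respectively $F^+_{T_1}\subseteq F^+_T$), so the size bound on $X$ only strengthens when restated in terms of $F^+_T$.

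The main case is when $u$ is an ancestor of $u_0$ in $T$, so that $L(T_u)$ contains leaves from both sides of the top-level split.  Suppose first that $p\in L(T_{u_0})$; the symmetric case $p\in L(T_1)$ is bridged by an $H_T$-edge from $p$ to some $x'\in L(T_{u_0})\setminus F^+_{T_{u_0}}$, which exists because $p\notin F^+_T\supseteq S_{H_T}(F^+_{T_{u_0}})$.  The hypothesis $p\notin F^+_T$ forces $T$ not to be $F$-dense and forces sub-case~3(a) in the definition of $F^+_T$ to apply, so $|F^+_{T_{u_0}}|\le(1-\beta/\Delta)|T_{u_0}|$ and $p\notin F^+_{T_{u_0}}$.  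The inductive hypothesis applied to $T_{u_0}$ then produces a set $X_0\subseteq L(T_{u_0})$ of size at least $(1-a/\Delta)|T_{u_0}|-|F^+_{T_{u_0}}|\ge(a/\Delta)|T_{u_0}|\ge(a/(3\Delta))|T|$ (using $a=\beta/2$), reachable from $p$ via paths in $G_{T_{u_0}}-F$ of length at most $C\diam(T_{u_0})$.  Choosing $\zeta\le a/3$, Property~(PR2) yields $|N_{H_T}(X_0)|\ge(1-\eta/\Delta)|T|$; setting $\eta\le a/3$ and using $|T_u|\ge|T|/3$ together with $F\subseteq F^+_T$, the set $X=(X_0\cup N_{H_T}(X_0))\cap L(T_u)\setminus F$ has size at least $(1-a/\Delta)|T_u|-|F^+_T\cap L(T_u)|$.

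The hardest part is the path-length bound.  For $q\in X\setminus X_0$, the path $p\to x\to q$ with $x\in X_0$ and $xq\in H_T$ has length at most $C\diam(T_{u_0})+\dist(x,q)\le C\diam(T_{u_0})+\diam(T_u)$, since $x,q\in L(T_u)$.  The fair-split property $\diam'(T_{u_0})\le(1-1/(2d))\diam'(T_u)$, combined with the sandwich $\diam\le\diam'\le d\cdot\diam$, is what lets this fit under $C\diam(T_u)$ with $C=4d$; the interplay between the bounding-box measure $\diam'$ (which fair-split controls) and the Euclidean $\diam$ (appearing in the claim) is precisely where the factor of $d$ in $C=4d$ comes from, and making all the constants line up is the most technically delicate step.
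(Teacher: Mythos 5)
Your proposal tracks the paper's argument for most of the way, but it contains a genuine gap in the case $p\in L(T_1)$.

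You assert that ``$p\notin F^+_T$ \ldots forces sub-case~3(a) in the definition of $F^+_T$ to apply, so $|F^+_{T_{u_0}}|\le(1-\beta/\Delta)|T_{u_0}|$,'' and you use this to justify bridging $p$ to $L(T_{u_0})\setminus F^+_{T_{u_0}}$ via an $H_T$-edge, citing $p\notin F^+_T\supseteq S_{H_T}(F^+_{T_{u_0}})$. That implication is correct when $p\in L(T_{u_0})$ (since sub-case~3(b) would put all of $L(T_{u_0})$, hence $p$, into $F^+_T$). But when $p\in L(T_1)$ it fails: sub-case~3(b) may perfectly well apply while $p\notin F^+_T$, because in that sub-case $F^+_T=L(T_{u_0})\cup F^+_{T_1}$ and the condition on $p$ is only $p\notin F^+_{T_1}$. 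Under sub-case~3(b), $F^+_T$ does \emph{not} contain $S_{H_T}(F^+_{T_{u_0}})$, so your inclusion $F^+_T\supseteq S_{H_T}(F^+_{T_{u_0}})$ is simply false there and the bridging edge you invoke need not exist. Worse, even if it did, the inductive set $X_0$ from $T_{u_0}$ would be useless: with $|F^+_{T_{u_0}}|>(1-\beta/\Delta)|T_{u_0}|$ the lower bound $(1-a/\Delta)|T_{u_0}|-|F^+_{T_{u_0}}|$ can be negative, and in any event every $x'\in L(T_{u_0})$ is abandoned (so in $F^+_T$) in this sub-case.

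The paper handles this by adding a separate case: when $|F^+_{T_{u_0}}|>(1-\beta/\Delta)|T_{u_0}|$, one applies the inductive hypothesis \emph{directly to $T_1$} (with the same $u$), gets a set $X\subseteq L((T_1)_u)$, and then verifies the size bound against $F^+_T\cap L(T_u)$ by noting that $F^+_T\cap L(T_u)$ decomposes as $L(T_{u_0})\cup(F^+_T\cap L((T_1)_u))$, so the extra $|T_{u_0}|$ term that shows up when replacing $|(T_1)_u|$ by $|T_u|$ is absorbed. You need this case; without it the induction is incomplete.

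One smaller remark: you never actually carry out the path-length arithmetic for $C=4d$, only gesture at the $\diam$ versus $\diam'$ tension. The paper's computation is $\bigl(C(1-1/(2d))+1\bigr)\diam'(T_u)<C\diam'(T_u)$ in sub-case~(a) and $\bigl(C(1-1/(2d))+2\bigr)\diam'(T_u)=C\diam'(T_u)$ in sub-case~(b), which is tight for $C=4d$; so the ``$+2$'' from the extra bridging edge is exactly what determines the constant, and any attempt to shave it would break. Your remaining case analysis (induction on the size of $T_u$, the two trivial cases $T_u\subseteq T_{u_0}$ and $T_u\subseteq T_1$, the use of Property~(PR2) with $\zeta$ and $\eta$ tied to $a=\beta/2$) matches the paper.
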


\begin{proof}
  The proof is by induction on $|T|$.  If $|T|\le\kappa$, the result
  is trivial since $G_T$ is the complete graph.  Thus, we may assume
  that $|T|>\kappa$.  If $T_u$ is contained in $T_{u_0}$ then we can
  apply induction on $T_{u_0}$ (with $T=T_{u_0}$ and $u=u$).   This yields a set $X$ of size
  \[
     (1-a/\Delta)|T_u|-|F^+_{T_{u_0}}\cap L(T_u)|
     \ge (1-a/\Delta)|T_u|-|F^+_{T}\cap L(T_u)| \enspace .
  \]
  as required. Similarly, if $T_u$ is contained in $T_1$ then we can
  apply induction on $T_1$.

  The only possibility that remains is that $u$ is a proper ancestor
  of $u_0$, so $T_u$ contains leaves of $T_{u_0}$ and leaves of $T_1$.
  There are several cases to consider:
  \begin{enumerate}
    \item $|F^+_{T_{u_0}}|\le (1-\beta/\Delta)|T_{u_0}|$. In this case, there are two subcases
    to consider:
    \begin{enumerate}
      \item $p\in L(T_{u_0})$.  Since $u_0$ is not the root of $T$,
      $\diam'(T_{u_0}) \le (1-1/2d)\diam'(T_u)$.  We apply
      induction on $T_{u_0}$ (with $T=T_{u_0}$ and $u=u_0$)
       to find a $p$-reachable set $X_0\subseteq L(T_{u_0})$ of size
      \begin{align*}
        |X_0| & \ge (1-a/\Delta)|T_{u_0}|-|F^+_{T_{u_0}}| \\
              & \ge (1-a/\Delta)|T_{u_0}|-(1-\beta/\Delta)|T_{u_0}| \\
              & = ((\beta-a)/\Delta)|T_{u_0}| \\
              & = (\beta/(2\Delta))|T_{u_0}|
              & \text{(since $a = \beta/2$).} \\
              & \ge (\beta/(6\Delta))|T| \enspace .
              & \text{(since $|T|/3\le |T_{u_0}|$).} 
      \end{align*}
      By Property~(PR2) of $H_T$ (with $\zeta = \beta/6$ and $\eta =
      a/3$), we can then take $X=N_{H_T}(X_0)\cap L(T_u)\setminus F^+_T $.  Then
      \begin{align*}
         |X| & \ge (1-\eta/\Delta)|T|-|F^+_T\cap L(T_u)| - (|T|-|T_u|) \\
             & = |T_u|-(\eta/\Delta)|T|-|F^+_T\cap L(T_u)| \\
             & \ge (1-3\eta/\Delta)|T_u|-|F^+_T\cap L(T_u)| 
                & \text{(since $3|T_u|\ge |T|$)}\\
             & = (1-a/\Delta)|T_u|-|F^+_T\cap L(T_u)| \\
      \end{align*} 
      and every point $q\in X$ is reachable from $p$ by a path in $G_T-F$
      of length at most
        \[ (C(1-(1/2d))+1)\diam'(T_{u}) < C\diam'(T_u) \]
      for $C= 4d$.

      \item $p\in L(T_1)$. Since $p\not\in F^+_T$, $H_T$ contains an edge
      from $p$ to some point $p'\in L(T_{u_0})\setminus F^+_{T_{u_0}}$.
      As described in the previous case, there is a set $X\subset
      L(T)$ of size $(1-a/\Delta)|T|-|F^+_{T}|$ that is reachable
      from $p'$ by paths of length at most $((1-1/2d)C+1)\diam'(T_u)$.
      The edge $pp'$ has length at most $\diam'(T_u)$. Therefore every
      $q\in X$ is reachable from $p$ using paths of length at most
      $((1-1/2d)C+2)\diam'(T_u) = C\diam'(P)$ for $C= 4d$.
    \end{enumerate}

    \item $|F^+_{T_{u_0}}|> (1-\beta/\Delta)|T_{u_0}|$.  In this case,
     $F^+_T= L(T_{u_0})\cup F^+_{T_1}$.
     Therefore, $p\in L(T_1)$.  Now, we apply
     induction on $T_1$ (with $T=T_1$ and $u=u$) and obtain a set $X\subseteq L(T_u)$
     that can be reached by $p$ in $G_T-F$ with paths of length at most
     $C\diam'(T_u)$.  Now,
     \begin{align*}
       |X| & \ge (1-a/\Delta)|(T_1)_u| - |F^+_{T_1}\cap L((T_1)_u)| \\
           & =   (1-a/\Delta)|(T_1)_u| - |F^+_{T_1}\cap L((T_1)_u)|
                    - |T_{u_0}| + |T_{u_0}| \\
           & \ge   (1-a/\Delta)|(T_1)_u| - |F^+_{T}\cap L((T_1)_u)|
                    - |T_{u_0}| + |T_{u_0}| \\
           & =   (1-a/\Delta)|(T_1)_u| - |F^+_{T}\cap L(T_u)| + |T_{u_0}| \\
             & \qquad \text{(since $F^+_T\cap L(T_u)= L(T_{u_0}) \cup (F^+_T\cap L((T_1)_u))$)} \\
           & >   (1-a/\Delta)|T_u| - |F^+_{T}\cap L(T_u)| 
             & \text{(since $|T_u|=|(T_1)_u|+|T_{u_0}|$),} 
     \end{align*}
     as required. \qedhere
    \end{enumerate}
\end{proof}

\subsection{Navigating the Well-Separated Pairs}
\seclabel{navigating}

Assume $\epsilon < 1/2$.  For each node $u$ of $T$, define
$\lbl(u)=\floor{\log_{1+\epsilon}|T_u|}$. We say that a node $u$ of $T$
is \emph{special} if $u$ is a leaf or if $\lbl(u)$ is different from both
its children.  If $u$ is special, then $T_u$ is also \emph{special}.
Observe that for every node $w$ of $T$, $T_w$ contains a special
subtree $T_{w'}$ with $|T_{w'}|\ge (1-2\epsilon)|T_w|$.\footnote{Proof:
Consider the subtree $T_w'$ of $T_w$ induced by all nodes $v$ in $T_w$
such that $\lbl(v)=\lbl(w)$. $T_w'$ is non-empty and therefore contains
at least one leaf $w'$ which, by definition, is a special node of $T$.
Therefore $T_{w'}$ is a special subtree of $T_w$ and $|T_{w'}|\ge
|T_w|/(1+\epsilon)$ which implies $|T_{w'}|\ge (1-2\epsilon)|T_w|$, for
$\epsilon \le 1/2$.}  Let $S(T)$ denote the set of special nodes in $T$.

Recall that $W=\{(A_i,B_i):i\in\{1,\ldots,m\}\}$ is an $s$-well-separated
pair decomposition for $P$ using $T$. For each $i\in\{1,\ldots,m\}$,
we use the following notational conventions:  $|A_i|\ge |B_i|$,
$A_i=L(T_{a_i})$, and $B_i=L(T_{b_i})$, where $a_i$ and $b_i$ are nodes
of $T$.

Our robust spanner begins with the graph $G_T$ described in the
previous section that is constructed using the fair-split tree $T$.
Next, we will add a new graph $G_W$ that provides connections between
the well-separated pairs in $W$.

To begin, we create a new set of well-separated pairs $W'$ as follows:
For each pair $(A_i,B_i)\in W$, we find the largest special subtree
$T_{a_i'}$ of $T_{a_i}$ and add the pair $(A_i',B_i)=(L(T_{a_i'}),B_i))$
to $W'$.  [Although each pair $(A_i',B_i)\in W'$ is $s$-well-separated,
$W'$ is not necessarily a WSPD for $P$.  In particular, there are pairs
of points with $p\in A_i\setminus A_i'$ and $q\in B_i$ that
are not represented in $W'$.]

Next, we partition $W'$ into groups $\{W'_u: u\in S(T)\}$ indexed by
the special nodes of $T$ where, for each $u\in S(T)$,
\[
	W'_u = \{ (A_i',B_i)\in W' : a_i' = u \}
\]
For each group $W'_u$, define $B'_u=\bigcup_{(A,B)\in W'_u} B$
and let $H'_u$ be an expander graph on the pair $(L(T_u), B'_u)$ with
the following properties:
\begin{enumerate}
  \item[(PR3)] For any $X\subseteq L(T_u)$ with $|X|\le
  (1-\epsilon)|T_u|$, 
   \[ |S_{H'_u}(X)| \le \epsilon|X|/\log_{1+\epsilon} n \]
%
  \item[(PR4)] For any two sets $X,Y\subset L(T_u)$ with $|X|,|Y|\ge
  \epsilon|T_u|$, $G$ contains at least one edge $xy$ with $x\in X$
  and $y\in Y$.
\end{enumerate}

\begin{clm}\clmlabel{edge-count-iii}
  There exists a graph $H'_u$ that satisifies Properties~(PR3) and (PR4)
  that has $O((|T_u|+|B'_u|)\log n\log\log n)$ edges.
\end{clm}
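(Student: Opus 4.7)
My plan is to assemble $H'_u$ as the edge-disjoint union of two bipartite random graphs, $H_1$ tailored to (PR3) and $H_2$ tailored to (PR4), and to count their edges separately.

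For (PR3), I would invoke \lemref{shrink} on the pair $(A,B)=(L(T_u),B'_u)$ with $k=\lceil 1/\epsilon\rceil$ and $\tau=\lceil(\log_{1+\epsilon} n)/\epsilon\rceil$. Because $1/k\le\epsilon$, any $X\subseteq L(T_u)$ with $|X|\le(1-\epsilon)|T_u|$ satisfies the hypothesis $|A'|\le(1-1/k)|A|$ of the lemma, and the conclusion $|S(A')|\le|A'|/\tau$ translates to exactly the bound $\epsilon|X|/\log_{1+\epsilon} n$ demanded by (PR3). With $k=O(1)$, $\tau=\Theta(\log n)$, and $\log|B'_u|=O(\log n)$, the edge estimate of \lemref{shrink} becomes $O(|B'_u|\log n)$, which is within the target.

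For (PR4), I would take $H_2$ to be the expander produced by \lemref{expand} with $A=B=L(T_u)$ and $k=\ell=\lceil 2/\epsilon\rceil$. Any $Y\subseteq L(T_u)$ with $|Y|\ge\epsilon|T_u|\ge|T_u|/\ell$ then satisfies $|N_{H_2}(Y)|\ge(1-1/k)|T_u|\ge(1-\epsilon/2)|T_u|$, and for any $X\subseteq L(T_u)$ with $|X|\ge\epsilon|T_u|$, inclusion--exclusion inside $L(T_u)$ gives $|X\cap N_{H_2}(Y)|\ge\epsilon|T_u|+(1-\epsilon/2)|T_u|-|T_u|=(\epsilon/2)|T_u|>0$, so $H_2$ itself contains an edge from $X$ to $Y$. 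With $k$ and $\ell$ constant, this contributes only $O(|T_u|)$ edges.

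The one place where care is needed is the implicit size hypothesis $|A|=\Omega(|B|)$ in \lemref{shrink}: we have no a priori control on $|B'_u|/|L(T_u)|$, and a naive chunking of $B'_u$ would multiply the shadow bound across chunks. I would instead revisit the probabilistic proof sketch after \lemref{shrink}, where the only term that degrades when $|B|\gg|A|$ is the union-bound factor $\binom{|B|}{x}$, contributing an extra $\log|B|$ to the exponent; inflating $\Delta$ by an $O(\log\log n)$ factor restores the argument, and this is where the $\log\log n$ in the claimed bound originates. Combining the edges of $H_1$ and $H_2$ then yields $|E(H'_u)|=O((|T_u|+|B'_u|)\log n\log\log n)$, as required.
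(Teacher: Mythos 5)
Your construction is the same as the paper's: an edge-disjoint union of the expander from \lemref{shrink} on $(L(T_u),B'_u)$ with $k=\Theta(1/\epsilon)$, $\tau=\Theta(\log_{1+\epsilon} n)$ for (PR3), and the expander from \lemref{expand} on $(L(T_u),L(T_u))$ with constant $k=\ell$ for (PR4). Your inclusion--exclusion derivation of (PR4) from the expansion guarantee is a clean justification that the paper leaves implicit, and your choice $k=\ell=\lceil 2/\epsilon\rceil$ (rather than the paper's $1/\epsilon$) avoids the borderline case where $|X\cap N(Y)|$ could be exactly $0$. So far this is essentially the paper's proof.

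Where you diverge is the last paragraph, and that part is not right as stated. You are correct that the paper invokes \lemref{shrink} without checking its stated hypothesis $|A|=\Omega(|B|)$, and that there is no a priori control on $|B'_u|/|T_u|$ --- that is a fair observation. But the hypothesis is in fact not used in the probabilistic proof: the union-bound calculation already carries the factor $\binom{|B|}{x}$ and the target $<1/|B|$, and the cost of both is exactly the $\log|B|$ term that is \emph{already present} in the lemma's degree bound $\Delta=O(k\log\tau+\tau\log k+\log|B|)$. Since $B'_u\subseteq P$, we have $\log|B'_u|=O(\log n)$, so $\Delta=O(\log\log n+\log n+\log n)=O(\log n)$ and the (PR3) graph has only $O(|B'_u|\log n)$ edges --- no extra $\log\log n$ inflation is required, and in fact the claim's bound is already a loose overcount of the true $O(|T_u|+|B'_u|\log n)$. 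The $\log\log n$ in the paper's overall $O(n\log^2 n\log\log n)$ edge bound comes from the graph $G_T$ in \clmref{edges-t} (where $\Delta=\Theta(\log n)$ and the expander there costs $\Delta\log\Delta$ per vertex), not from $G_W$. So the conclusion of your proof is correct, but the stated provenance of the $\log\log n$ factor --- and the need to ``inflate $\Delta$ by $O(\log\log n)$'' --- is a misdiagnosis.
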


\begin{proof}
  To satisfy Property~(PR3), $H'_u$ contains an expander graph for the
  pair $(A=L(T_u),B=B'_u)$ described by \lemref{shrink} with parameters
  $k=1/\epsilon$ and $\tau=\log_{1+\epsilon} n/\epsilon$.  For constant
  $\epsilon >0$, this graph has $O(|B'_u|\log n\log\log n)$ edges.

  To satisfy Property~(PR4), $H'_u$ contains an expander graph for the
  pair $(L(T_u),L(T_u))$ described by \lemref{expand} with parameters
  $k=\ell=1/\epsilon$.  For constant $\epsilon >0$, this graph has 
  $O(|T_u|)$ edges.
\end{proof}

Let $G_W$ denote the graph obtained by taking all the edges of $H'_u$
for every special node $u$ in $T$.

\begin{clm}\clmlabel{edges-w}
  The graph $G_W$ has $O(n\log^2 n\log\log n)$ edges.
\end{clm}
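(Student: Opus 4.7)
The plan is to bound $|E(G_W)|$ by summing the per-subtree edge count from \clmref{edge-count-iii} over all special nodes:
\[
   |E(G_W)| \le \sum_{u\in S(T)} O\bigl((|T_u|+|B'_u|)\log n\log\log n\bigr) \enspace .
\]
So it suffices to prove the two bounds $\sum_{u\in S(T)}|T_u|=O(n\log n)$ and $\sum_{u\in S(T)}|B'_u|=O(n\log n)$; multiplying by the $\log n\log\log n$ factor then yields $O(n\log^2 n\log\log n)$ as required.

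For the first sum, I would swap the order of summation and bound, for each leaf $\ell\in L(T)=P$, the number of special ancestors of $\ell$. Consider any root-to-leaf path $r=v_0,v_1,\ldots,v_k=\ell$. Since $|T_{v_i}|$ is non-increasing as $i$ grows, so is $\lbl(v_i)=\floor{\log_{1+\epsilon}|T_{v_i}|}$, and its values lie in $\{0,1,\ldots,\floor{\log_{1+\epsilon}n}\}$, a set of size $O(\log n)$. Group the $v_i$ by their label: for each fixed value $j$ appearing on the path, the indices $i$ with $\lbl(v_i)=j$ form a contiguous block $v_a,\ldots,v_b$. If $i<b$, then $v_{i+1}$ is a child of $v_i$ with $\lbl(v_{i+1})=j=\lbl(v_i)$, so $v_i$ is not special. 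Hence at most the single node $v_b$ in each block can be special, giving at most $O(\log n)$ special ancestors per leaf and $\sum_{u\in S(T)}|T_u|\le n\cdot O(\log n)=O(n\log n)$.

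For the second sum, observe that the groups $\{W'_u:u\in S(T)\}$ partition $W'$, and the map $(A_i,B_i)\mapsto(A_i',B_i)$ is a bijection between $W$ and $W'$ that preserves the second coordinate. Therefore
\[
   \sum_{u\in S(T)}|B'_u|
     \le \sum_{u\in S(T)}\sum_{(A',B)\in W'_u}|B|
     =  \sum_{i=1}^{m}|B_i|
     =  \sum_{i=1}^{m}\min\{|A_i|,|B_i|\} \enspace ,
\]
using the convention $|A_i|\ge|B_i|$. By \lemref{wspd-ii}, this sum is $O(s^d n\log n)=O(n\log n)$ for constant $s$ and $d$.

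The main (minor) obstacle is the counting argument for the number of special ancestors of each leaf; the key observation is that labels along a root-to-leaf path are monotone and range over only $O(\log n)$ values, with at most one special node per label value per path. Everything else is a direct application of the bounds already established in \clmref{edge-count-iii} and \lemref{wspd-ii}.
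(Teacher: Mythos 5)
Your proof is correct and takes essentially the same route as the paper: both bound the $\sum_u |B'_u|$ contribution via the bijection with $W$ together with \lemref{wspd-ii}, and both bound the $\sum_u |T_u|$ contribution by observing that special nodes partition into $O(\log n)$ label classes with disjoint subtrees (your per-leaf ancestor count is the dual of the paper's per-layer count). The only cosmetic difference is that you apply the coarser $O((|T_u|+|B'_u|)\log n\log\log n)$ bound from \clmref{edge-count-iii} uniformly, whereas the paper notes the (PR4) expanders contribute only $O(|T_u|)$ each; this extra slack is harmlessly absorbed since $O(n\log n)\cdot\log n\log\log n$ still matches the target.
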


\begin{proof}
  The number of all edges used in graphs created to achieve Property~(PR3) is
  \[
      \sum_{u\in S(T)}O(|B_u'|\log n\log\log n)
      = \sum_{i=1}^m O(|B_i|\log n\log\log n)
      = O(n\log^2 n\log\log n)
  \]
  where the final upper bound follows from the convention that $|A_i|\ge
  |B_i|$ and \lemref{wspd-ii}.

  Each graph used to achieve Property~(PR4) for a node $H'_u$ has
  $O(|T_u|)$ edges.  By parititioning the special nodes of $T$ into
  $\ceil{\log_{1+\epsilon} n}$ sets where, for any two nodes $u$ and $u$
  in the same set, $T_u$ and $T_{w}$ are disjoint shows that the total
  number of edges in these graphs is at most $O(n\log n)$.
\end{proof}

%
%

We can now complete the proof of \thmref{main-i}.

\begin{proof}[Proof of \thmref{main-i}]
  Our final construction $G$ contains the graph $G_T$ described in
  \secref{exploding} as well the graph $G_W$ described above.  That $G$
  has $O(n\log^2 n\log\log n)$ edges follows from Claims~\ref{clm:edges-t}
  and \ref{clm:edges-w}.  Given any $F\subseteq P$, we define the set $F^+$
  as follows:
  \begin{enumerate}
     \item Let $D_0=\{u\in V(T): |F^+_T\cap L(T_u)|>(1-3\epsilon)|T_u|\}$.
      [The nodes in $D_0$ are $F^+_T$-dense, so we will abandon
       everything in them.]

     \item For each node $u$ of $T$, let $u'$ denote the root of the largest
      special subtree $T_{u'}$ contained in $T_u$ (possibly $u'=u$) and 
      let $D_1=\{u\in V(T): u' \in D_0\}$. 
      [The nodes in $D_1$ have had their largest special subtree abandoned, so we abandon them as well.] 

     \item Let $F^+_0 = \bigcup_{u\in D_1} L(u)$.
     \item Let $E=S(T)\setminus D_1$ and $F_1^+=\bigcup_{a'\in E}
       S_{H_a'}(F^+_0\cap L(T_u))$. [Each point $q\in F^+_1$ participates
       in a pair $(A_i,B_i)$ with $q\in B_i$, but $q$ has no surviving
       edge to $A_i'$ so we abandon $q$.]
  \end{enumerate}
  Finally, we define $F^+=F^+_0\cup F^+_1$.  Note that $F\subset F^+$
  because $F\subset F^+_T$ and every $x\in F$ is a leaf of $T$ that
  satisfies Condition~1 and therefore Condition~2, so $x\in F^+_0$.

  First we analyze the size of $F^+$.  Let $D'\subseteq D_1$ consist of only those nodes $u$
  such that no ancestor of $u$ is in $D_1$.  Then $L(T_u)$ and $L(T_w)$ are
  disjoint for any distinct nodes $u,w\in D'$ and $F^+_0=\bigcup_{u\in
  D'} L(T_u)$.  
  For each node $u$ in $D'$, 
  \[
      |L(T_u)\cap F^+_T| \ge |L(T_{u'})\cap F^+_T| > (1-3\epsilon)|T_{u'}| 
        \enspace ,
  \]
  so
  \[
      |T_{u'}| < |L(T_u)\cap F^+_T|/ (1-3\epsilon) \enspace .
  \]
  This implies
  \begin{align*}
     |F^+_0| & = \sum_{u\in D'} |F^+_0\cap L(T_u)| \\
             & = \sum_{u\in D'} |T_u| \\
             & \le \sum_{u\in D'} (1+\epsilon)|T_{u'}| \\
             & \le \sum_{u\in D'} (1+\epsilon)|F^+_T\cap L(T_u)|/(1-3\epsilon) \\
             & = (1+\epsilon)|F^+_T|/(1-3\epsilon) \enspace .
  \end{align*}

  Now, the nodes of $E$ can be partitioned into
  $r=\floor{\log_{1+\epsilon} n}+1$ sets $E_1,\ldots,E_r$ where $E_i =
  \{u \in E :\lbl(u)=i\}$.  For any two distinct nodes $u,w\in E_i$,
  $L(T_u)$ and $L(T_w)$ are disjoint, so
  \begin{align*}
     |F^+_1| 
     & = \left|\bigcup_{u\in E} S_{H_u}(F^+_0\cap L(T_u))\right| \\
     & = \left|\bigcup_{i=1}^r\bigcup_{u\in E_i} S_{H_u}(F^+_0\cap L(T_u))\right| \\
     & \le \sum_{i=1}^r\sum_{u\in E_i} |S_{H_u}(F^+_0\cap L(T_u))| \\
     & \le \sum_{i=1}^r\sum_{u\in E_i} \epsilon|F^+_0\cap L(T_u)|/\log_{1+\epsilon} n & \text{(By Property~(PR3))} \\
     & \le \sum_{i=1}^r \epsilon|F^+_0|/\log_{1+\epsilon} n \\
     & \le \epsilon|F^+_0|
  \end{align*}
  Combining the two preceding bounds \clmref{f-plus} we get
  \[
      |F^+|\le |F^+_0|+|F^+_1|
      \le (1+\epsilon)|F^+_0|
      \le (1+\epsilon)^2|F^+_T|/(1-3\epsilon)
      \le (1+\epsilon)^3|F|/(1-3\epsilon)
      \le (1+7\epsilon)|F| \enspace ,
  \]
  for $\epsilon \le \sqrt{145}-12$.

  Next we show that $G-F$ is a $t$-spanner of $P\setminus F^+$.
  Consider any two distinct points $p,q\in P\setminus F^+$ and
  let $(A_i,B_i)\in W$ be the pair such that $p\in A_i$ and $q\in B_i$.

  Since $q$ is not in $F^+_1$, $G_W$ contains an edge $qq'$ with $q'\in
  A_i'\setminus F^+_0$.
  Since $q'$ is not in $F^+_0\supseteq F^+_T$, $q'$ is able to explode
  into $a_i'$.  Specifically, \clmref{explode} implies that there is
  a subset $X_{q'}\subseteq A_i'$, with
  \[  |X_{q'}|\ge (1-a)|A_i'| - |F^+_T\cap A_i'| 
        \ge (1-a)|A_i'| - (1-3\epsilon)|A_i'| 
        \ge (3\epsilon-a)|A_i'| \ge 2\epsilon|A_i'|
  \] 
  (for $a < \epsilon$)
  such that, for every $y\in X_{q'}$, $G-F$ contains a path from $q'$ to $y$
  of length at most $\diam'(T_{a_i'})\le \diam'(a_i)$.

  Similarly, $p$ can explode into $A_i$, so there exists an analogous
  set $X_p\subset A_i$ of size \[  |X_p| \ge (3\epsilon-a)|A_i|
  \ge 2\epsilon|A_i| \enspace .  \] for $a\le\epsilon$.  Now,
  since $|A_i\setminus A_i'|\le \epsilon |A_i|$, 
  \[ |X_p\cap A_i'|\ge |X_p|-|A_i\setminus A_i'|\ge \epsilon|A_i| \ge \epsilon|A_i'| \enspace . \]

  By Property~(PR4) of $H_{a_i'}$, $G_W$ contains an edge $xy$ with $x\in X_p\cap A_i'$ and $y\in X_{q'}$.  
   This yields a path from $p$ to $q$ of length at most
  \begin{align*}
    \dist_{G-F}(p,q) & \le \dist_{G-F}(p,x) + \dist(x,y) + \dist_{G-F}(y,q') + \dist(q',q) \\
      & \le (2C+1)\diam'(a_i) + \dist(q',q) \\
      & \le (2C+2)\diam'(a_i') + \diam'(b_i) + \dist(p,q) \\
      & \le (1+O(1/s))\dist(p,q) \enspace . 
  \end{align*}
  Choosing $s = c/\epsilon$ for a sufficiently large constant $c$
  completes the proof.
\end{proof}

\section{Discussion}

The obvious open problem left by this work is to close the gap between the
$\Omega(n\log n)$ lower bound and our $O(n\log^2\log\log n)$ upper bound
on the number of edges required in a $(1+\epsilon)k$-robust $t$-spanner.

A less obvious object of study is the relation between robustness and
various notions of dimension.  Bose \etal\ show that, any $f(k)$-robust
spanner of the 1-dimensional point set $P_1=\{1,\ldots,n\}$ requires a
superlinear number of edges, for any non-decreasing function $f:\N\to\N$.
For example, they show that, for any constants $\delta>0$ and $t>1$, any
$O(k^{1+\delta})$-robust $t$-spanner of $P$ must have $\Omega(n\log\log
n)$ edges.

The obvious 2-dimensional generalization of $P_1$ is the
$\sqrt{n}\times\sqrt{n}$ grid $P_2=\{1,\ldots,\floor{\sqrt{n}}\}^2$.
However, it turns out that the square grid on $P_2$ is a
$O(k^2)$-robust $3$-spanner with a linear number of edges
(\cite[Section~4]{bose.dujmovic.ea:robust}).  More generally, for any
constant $d$, the $d$-dimensional cubic grid is an $O(k^{d/(d-1)})$-robust
$O(\sqrt{d})$-spanner with a linear number of edges.

The preceding discussion suggests that robustness is a property that is
more easily achieved as the ``true'' dimension of a point set increases,
and this idea is worth exploring.  As a concrete question in this vein,
we propose the following:  Let $P\subset[0,1]^2$ be an $n$-point set
with the property that, for any square $S\subset [0,1]$,
\[
    \floor{an\cdot\ar(S)} \le |P\cap S| \le An\cdot\ar(S)
\]
for some constants $a < 1 < A$.  Determine the slowest-growing function
$f:\N\to\N$ such that $P$ has an $f(k)$-robust spanner with $O(n)$
edges.  Binning the points of $P$ into grid cells of area $1/(an)$ and
use the result described above shows that $f(k)=O(k^2)$.  Is this the
best possible?

\section*{Acknowledgement}

Pat Morin would like to thank Michiel Smid for pointing him to Callahan's thesis, and then pointing him to Chapter~4, and then reading it for him.

\bibliographystyle{plain}
\bibliography{robust2}

\end{document}